\newtheorem{definition}{Definition}
\newtheorem{theorem}{Theorem}
\newtheorem{lemma}{Lemma}
\newcommand{\remove}[1]{}
\begin{document}

\title{How to Securely Compute the Modulo-Two Sum of Binary Sources}

\author{\IEEEauthorblockN{Deepesh Data}
\IEEEauthorblockA{
TIFR, Mumbai\\
deepeshd@tifr.res.in}
\and
\IEEEauthorblockN{Bikash K. Dey}
\IEEEauthorblockA{IIT Bombay, Mumbai\\
bikash@ee.iitb.ac.in}
\and
\IEEEauthorblockN{Manoj Mishra}
\IEEEauthorblockA{IIT Bombay, Mumbai\\
mmishra@ee.iitb.ac.in}
\and
\IEEEauthorblockN{Vinod M. Prabhakaran}
\IEEEauthorblockA{TIFR, Mumbai\\
vinodmp@tifr.res.in}\thanks{Authors are listed in the alphabetical order.}}

\maketitle

\begin{abstract}

In secure multiparty computation, mutually distrusting users in a network want
to collaborate to compute functions of data which is distributed among the
users. The users should not learn any additional information about the data of
others than what they may infer from their own data and the functions they are
computing. Previous works have mostly considered the worst case
context (i.e., without assuming any distribution for the data);~Lee and Abbe
(2014) is a notable exception. Here, we study the average case (i.e., we work
with a distribution on the data) where correctness and privacy is only desired
asymptotically. 

For concreteness and simplicity, we consider a secure version of the function
computation problem of K\"orner and Marton (1979) where two users observe a
doubly symmetric binary source with parameter $p$ and the third user wants to
compute the XOR. We show that the amount of communication and randomness
resources required depends on the level of correctness desired. When zero-error
and perfect privacy are required, the results of Data et al. (2014) show that it
can be achieved if and only if a total rate of 1 bit is communicated between every
pair of users and private randomness at the rate of 1 is used up.  In contrast,
we show here that, if we only want the probability of error to vanish
asymptotically in blocklength, it can be achieved by a lower rate (binary
entropy of $p$) for all the links and for private randomness; this also
guarantees perfect privacy. We also show that no smaller rates are possible
even if privacy is only required asymptotically.

\end{abstract}

\IEEEpeerreviewmaketitle

\section{Introduction}\label{sec:intro}

In secure multiparty computation (MPC), mutually distrusting users in a network
want to collaborate to compute functions of data which is distributed among the
users.  The users should not learn any additional information about the data of
others than what they may infer from their own data and the functions they are
computing.  Various applications such as online auctions, electronic voting,
and privacy preserving data mining motivate the study of MPC~\cite[Chapter
1]{CramerDN}.

In a seminal result, Ben-Or, Goldwasser, and Wigderson~\cite{BGW88} (also see
Chaum, Cr\'epeau, and Damg{\aa}rd~\cite{CCD88}), established that information
theoretically secure computation of any function is feasible by $n$ users who
are connected pairwise by private noiseless communication links and who have
access to private randomness, even if any set of strictly less than $n/2$
users collude. The $n/2$ threshold is for the honest-but-curious setting where
the users do not deviate from the protocol during its execution, but a subset
of users may collude at the end of the protocol to try to infer information
about data of other users that they cannot infer from their own data and
outputs of the function they computed.  In making this inference, they may make
use of their own data, their private randomness, and all the messages they sent
and received during the execution of the protocol. The threshold is $n/3$ for
the malicious case where the colluding users may also deviate from the protocol
during its execution. It is also known that these thresholds are tight in the
sense that there exist functions which cannot be securely computed when the
number of colluders exceed these thresholds\footnote{In cases where the number
of colluders exceed these thresholds, additional noisy resources (e.g.,
distributed sources or noisy channels) can be exploited to perform secure
computation~\cite{CrepeauK88}. In this paper, our focus is on the case where
such additional resources are unavailable.}.
 
The amount of communication and randomness required to securely compute in
the model of~\cite{BGW88,CCD88} is an important open problem. Several works
have addressed this to a limited extent, for the most part, in the worst
case context (i.e., without assuming any distribution for the data) and for
zero-error computation with perfect
privacy~\cite{Kushilevitz89,FranklinY92,ChorK93,FeigeKN94,KushilevitzM97,BlundoSPV99,GalR05,DataPP14}.
{Most directly relevant to this paper is~\cite{DataPP14} where generic
information theoretic lower bounds were obtained for zero-error computation
in a three-user model with perfect privacy against individual users.} 

In this paper, in contrast to the above works, we take a distributed source
coding approach to this problem. Specifically, we will assume a probability
distribution for the data (discrete memoryless distributed source), and seek
the average-case performance under asymptotically vanishing error and vanishing
privacy leakage. We would like to point out that~\cite{LeeAbbe14} already
considered a similar setting, but for a much weaker notion of security than
what we consider below. For concreteness and simplicity, we focus on the famous
example of K\"orner and Marton~\cite{KornerM79}. Consider
Figure~\ref{fig:xor-setup}.  Alice (user 1) and Bob (user 2) observe data $X^n$
and $Y^n$ which are $n$-length bit strings drawn i.i.d. according to the
distribution $p_{XY}(x,y)= \frac{p}{2}1_{x\neq y} + \frac{1-p}{2}1_{x=y}$,
where $0\leq p \leq 1/2$. This is sometimes referred to as the doubly symmetric
binary source (DSBS) with parameter $p$. Charlie (user 3) wants to compute the
function $Z^n=X^n\oplus Y^n$, the binary sum (XOR) of the corresponding
elements of the data vectors. Note that $Z^n\sim$~i.i.d.  Bernoulli($p$).
K\"orner and Marton gave a function computation scheme which requires a rate of
$R=H_2(p)$ each from Alice to Charlie and Bob to Charlie such that Charlie
recovers $Z^n$ with vanishing error (as $n\rightarrow \infty$), where $H_2$ is
the binary entropy function.  The scheme involved Alice and Bob sending
syndromes of their observations computed for the same capacity approaching
linear code for binary symmetric channel with crossover probability $p$
(BSC($p$)). Charlie computes the binary sum of the syndromes to obtain the
syndrome of $Z^n$ from which $Z^n$ can be recovered with high probability.

We will additionally require the privacy conditions that Alice and Bob must not
learn more information about each other's data than what they can already infer
from their own data, and that Charlie should not learn more information about
Alice and Bob's data than what he can infer from the binary sum $Z^n$ he wants
to compute. Users only have access to private randomness and pairwise noiseless
bidirectional communication links which they may use over multiple rounds.  The
users are assumed to be honest-but-curious.  By~\cite{BGW88}, it is known that
any function of the data at Alice and Bob can be computed at Charlie while
guaranteeing these privacy requirements.  We are interested in characterizing
the rates of communication (expected number of bits exchanged over each link
per source symbol) and the rate of private randomness used. Our main result is
a characterization of these rates for the case where we only require that
Charlie reconstruct $Z^n$ with asymptotically vanishing probability of error
(as $n\rightarrow \infty$) and when the privacy conditions hold in the sense of
asymptotically vanishing information leakage (stated formally in
Section~\ref{sec:problem}).

One of the examples in~\cite{DataPP14} gives the answers for the zero-error
and perfect privacy case. It is easy to see that a simple protocol achieves
a rate of one bit per source symbol over each of the links and a rate of
one bit of private randomness\footnote{For example, Charlie sends to Alice
an $n$-length vector $K^n$ of i.i.d. uniformly distributed bits from his
private randomness; Alice sends $K^n\oplus X^n$ to Bob; Bob in turn sends
$(K^n\oplus X^n)\oplus Y^n$ to Charlie from which he can recover $X^n\oplus
Y^n$. Perfect privacy is easy to verify.}. \cite{DataPP14} shows that there
is no zero-error, perfectly private protocol which can do with less.  In
fact, none of these rates can be lowered even at the expense of higher
rates for the others. For completeness, a short proof of this is presented
in the appendix.

If the zero-error requirement is relaxed to vanishing error, the coding scheme
of K\"orner and Marton suggests the following secure computation scheme which
only requires rates of $H_2(p)$. Recall that K\"orner and Marton's function
computation scheme requires a rate of $R=H_2(p)$ from each of Alice and Bob to
Charlie. For secure computation, Alice sends to Bob an $nR$-length vector
$K^{nR}$ of i.i.d. uniformly distributed bits drawn from her private
randomness. Both Alice and Bob send their respective syndromes (of length $nR$)
XOR-ed with $K^{nR}$ to Charlie. Charlie adds these to recover the syndrome of
$Z^n$ as before. It is easy to see that this scheme, in fact, guarantees
perfect privacy. We show that this scheme is optimal in the sense that none of
the rates can be reduced even at the expense of higher rates for the others and
even if only asymptotically vanishing information leakage is desired. We prove
this converse result for a fairly general class of interactive protocols.

Related works include works on function computation without the privacy
requirement~\cite{OrlitskyR01,MaI11,MaI12}. As already pointed out above,
another related work is~\cite{LeeAbbe14}. It studies the randomness required
for secure sum computation under two different settings: (i) in the zero-error,
perfect privacy, worst-case setting, and (ii) average case, asymptotically
correct setting under a much weaker notion of privacy that users are unable to
asymptotically correctly guess the entire data of another user, but when no
private randomness is available to the users.

\section{Problem Definition and Statement of Results}
\label{sec:problem}
\usetikzlibrary{decorations.markings}
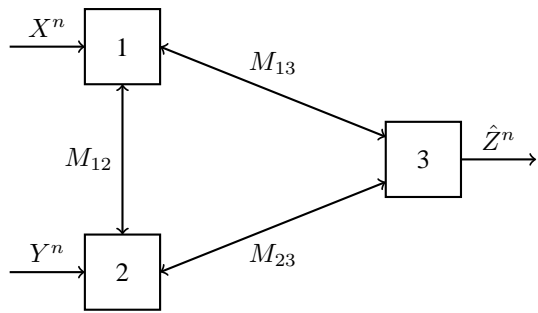
\begin{figure}[h]
\setlength{\unitlength}{1cm}
\centering
\begin{tikzpicture}[scale=1, thick]
\draw (1,0) rectangle (2,1);
\draw (1,3) rectangle (2,4);
\draw (5,1.5) rectangle (6,2.5);

\draw [->] (0,0.5) -- (1,0.5);
\draw [->] (0,3.5) -- (1,3.5);
\draw [->] (6,2) -- (7,2);
\draw [<->] (1.5,1) -- (1.5,3);
\draw [<->] (2,0.5) -- (5,1.7);
\draw [<->] (2,3.5) -- (5,2.3);

\node at (1.5,0.5) {2};
\node at (1.5,3.5) {1};
\node at (5.5,2) {3};

\node [above] at (0.5,0.5) {$Y^n$};
\node [above] at (0.5,3.5) {$X^n$};
\node [above] at (6.5,2) {$\hat{Z}^n$};

\node [left] at (1.5,2) {$M_{12}$};
\node [above] at (3.5,3) {$M_{13}$};
\node [below] at (3.5,1) {$M_{23}$};

\end{tikzpicture}
\caption{Setup for computing XOR securely}
\label{fig:xor-setup}
\end{figure}

In the setup of Figure~\ref{fig:xor-setup}, Alice (user 1) and Bob (user 2)
have blocks of data/input bits $X^n$ and $Y^n$ respectively, where ($X^n,Y^n$)
are drawn i.i.d. from a Doubly Symmetric Binary Source (DSBS)-$p$ distribution
$p_{XY}(x,y)$ such that $X$ and $Y$ are both $\text{Bernoulli}(1/2)$, and
$Pr(X\neq Y) = p$.  Charlie (user 3) wishes to compute an estimate $\hat{Z}^n$
of the bit-wise XOR of $X^n$ and $Y^n$.  That is, $\hat{Z}^n$ is an estimate of
$Z^n = X^n \oplus Y^n$. Notice that $X^n,Z^n$ are independent and so are
$Y^n,Z^n$. Each pair of users is connected by a binary, error-free,
bidirectional link private from the other user.  At the beginning of the
protocol, all users are allowed to generate private random variables, i.e.,
they may generate random variables which are independent of each other and the
data. We are interested in reliably and \emph{securely computing} the XOR,
where any single user does not learn anything about the other users'
inputs/output (if any) at the end of the protocol than what its own
input/output (if any) reveals about them. We formalise this in Definition
\ref{defn:ach_rate_triple}. 
We assume that the users are honest-but-curious, i.e., they follow the protocol
honestly but are interested in obtaining additional illegitimate information
about the inputs/output of other users from all the messages exchanged.

To accomplish the above task, users need to communicate. Communication proceeds
over multiple rounds. In each round~$t$, every user sends a (potentially empty)
message in the form of a variable length, binary string to every other user.
Let $M_{\overrightarrow{ij},t}$ denote the message from user~$i$ to user~$j$,
sent in round~$t$. $M_{\overrightarrow{ij},t}$ may depend only on user~$i$'s input (if any), private
randomness, and all the messages it has seen so far. We require that
$M_{\overrightarrow{ij},t}$ belong to a variable length prefix-free code
${\mathcal C}_{\overrightarrow{ij},t}$ which itself could be random (determined
by the inputs and the private randomnesses).  However, at the beginning of
round-$t$, both users $i$ and $j$ must each deterministically know ${\mathcal
C}_{\overrightarrow{ij},t}$ 
from the messages they have exchanged with each other over the $ij$-link
between them in the previous $(t-1)$ rounds. The total number of rounds is also
allowed to be random, but from the above description, it is clear that each
user will come to know when the exchanges involving it have finished.  We
insist that the protocol terminates in finite number of rounds with probability
1. On termination, Charlie outputs $\hat{Z}^n$ as a function of his private
randomness and all the messages he received.

\begin{definition}
\label{defn:protocol}
In a \emph{protocol} $\Pi_{n}$, where $n$ is the input block length, users
exchange messages with each other over several rounds as described above at
the end of which Charlie produces an output $\hat{Z}^n$.
\end{definition}

We use the following notation throughout this paper.
The \emph{transcript} on $ij$-link at time $t$ is 
\begin{equation*}
M_{ij,t} := (M_{\overrightarrow{ij},t}, M_{\overrightarrow{ji},t}).
\end{equation*}
We also define $M_{ij}^t:= (M_{ij,\tau})_{\tau=1}^t$, and 
$M_{ij}:=M_{ij}^\infty$ denotes the final transcript on the $ij$-link. Finally,
$L_{\overrightarrow{ij},t}$ is the length, in bits, of the message
$M_{\overrightarrow{ij},t}$. Clearly, $L_{\overrightarrow{ij},t}$ is a random
variable and $L_{\overrightarrow{ij},t} \in \{0, 1, 2, \ldots\}$. Similarly 
the length random variables $L_{ij,t}$, $L_{ij}^t$ and $L_{ij}$ are defined 
as the lengths of $M_{ij,t}$, $M_{ij}^t$ and $M_{ij}$ respectively.

\begin{definition}
The \emph{rate} of $\Pi_{n}$ is defined by the quadruple ($r_{13,n},r_{23,n},r_{12,n}, \rho_n$) where:
\begin{align*}
 r_{13,n} & := \frac{1}{n} \mathbb{E}[L_{13}]  \\
 r_{23,n} & := \frac{1}{n} \mathbb{E}[L_{23}]  \\
 r_{12,n} & := \frac{1}{n} \mathbb{E}[L_{12}]  \\
 \rho_n & := \frac{1}{n}H(M_{13},M_{23},M_{12}|X^n,Y^n)
\end{align*}
\end{definition}

We note that once the protocol ends at some finite time, all the subsequent messages are of zero-length.

\begin{definition}
\label{defn:ach_rate_triple}
A rate quadruple ($R_{13},R_{23},R_{12}, \rho$) is achievable in the setup
of Figure~\ref{fig:xor-setup} if there exists a sequence of protocols
$(\Pi_{n})_{n \in \mathbb{N}}$, with rates
$r_{ij, n} \leq R_{ij}$ for $i,j=1,2,3, i\neq j$, and $\rho_n \leq \rho$,
such that
\begin{align}
 P(\hat{Z}^n \neq Z^n) & \longrightarrow 0,
 \label{eqn:ach_rate_1} \\
 I(M_{13},M_{12} ; Y^n | X^n) & \longrightarrow 0,
 \label{eqn:ach_rate_2} \\
 I(M_{23},M_{12} ; X^n | Y^n) & \longrightarrow 0,
 \label{eqn:ach_rate_3}\\
 I(M_{13}, M_{23} ; X^n,Y^n | Z^n) & \longrightarrow 0.
 \label{eqn:ach_rate_4}
\end{align}
\end{definition}
Notice that we do {\em not} need to explicitly include the private random
variables in the privacy conditions since conditioned on the messages and
input (if any) at a user, its private random variable is independent of the
the other input(s). \eqref{eqn:ach_rate_4} is a privacy promise to
Alice and Bob that Charlie learns only asymptotically vanishing amount of
information about their data in addition to $Z^n$ which he is allowed to
compute. Similar interpretations hold for the other two privacy conditions. 

\begin{definition}
The \emph{rate region} $\mathcal{R}$ for the setup in 
Figure~\ref{fig:xor-setup} is defined as the closure of the set of 
all achievable rate quadruples.
\end{definition}

Our main result is a characterization of the rate region $\mathcal{R}$.
\begin{theorem}\label{theorem:main_thm}
\begin{align*}
\mathcal{R} = \{(R_{13},R_{23},R_{12}) : & \min(R_{13},R_{23}, R_{12},
\rho) \geq H(Z) \}.
\end{align*}
\end{theorem}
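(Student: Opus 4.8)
\section{Proof Strategy (Proposal)}

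The plan is to prove the two inclusions separately. For achievability I will verify that the masked--syndrome scheme sketched in the introduction meets Definition~\ref{defn:ach_rate_triple}. Alice draws a uniform string $K$ of length $nH(Z)$ from her private randomness, sends it to Bob over the $12$-link, and then Alice and Bob transmit $\mathrm{syn}(X^n)\oplus K$ and $\mathrm{syn}(Y^n)\oplus K$ to Charlie, who adds them and, by linearity of the syndrome map, recovers $\mathrm{syn}(Z^n)$ and decodes with a capacity-approaching BSC($p$) code. K\"orner--Marton gives $P(\hat Z^n\neq Z^n)\to0$; each of the three link rates and the randomness rate equals $H(Z)$; and because $K$ is an independent one-time pad, the three conditional mutual informations in \eqref{eqn:ach_rate_2}--\eqref{eqn:ach_rate_4} are in fact identically zero (given $X^n$ the pair $(M_{13},M_{12})$ is a function of $K$ alone, and given $Z^n$ the pair $(M_{13},M_{23})$ is a uniform pad independent of $(X^n,Y^n)$). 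Since any larger rates are trivially achievable by padding, $\mathcal{R}\supseteq\{(R_{13},R_{23},R_{12},\rho):\min(R_{13},R_{23},R_{12},\rho)\geq H(Z)\}$.

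The substance is the converse, that each of the four quantities is at least $H(Z)$. Two elementary reductions drive everything. First, since each link uses a prefix-free code, $n\,r_{ij,n}=\mathbb{E}[L_{ij}]\ge H(M_{ij})$, so it suffices to bound transcript entropies. Second, because $Z^n$ is independent of $X^n$ and of $Y^n$, given $X^n$ (resp.\ $Y^n$) the map $Y^n\leftrightarrow Z^n$ (resp.\ $X^n\leftrightarrow Z^n$) is a bijection, so the privacy hypotheses \eqref{eqn:ach_rate_2}, \eqref{eqn:ach_rate_3} become $I(M_{13},M_{12};Z^n\mid X^n)\to0$ and $I(M_{23},M_{12};Z^n\mid Y^n)\to0$. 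The same independence gives the \emph{single-share estimates}: since conditioning on the independent variable $Y^n$ can only increase the relevant mutual information, $I(Z^n;M_{23})\le I(Z^n;M_{23}\mid Y^n)\le I(Z^n;M_{23},M_{12}\mid Y^n)\to0$, and symmetrically $I(Z^n;M_{13})\to0$. Thus each message reaching Charlie, taken alone, tells essentially nothing about $Z^n$.

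For the links to Charlie I write $H(M_{13})\ge I(M_{13};Z^n\mid M_{23})=H(Z^n\mid M_{23})-H(Z^n\mid M_{13},M_{23})$. The first term equals $nH(Z)-I(Z^n;M_{23})\ge nH(Z)-o(n)$ by the single-share estimate. For the second term, correctness \eqref{eqn:ach_rate_1} and Fano give $H(Z^n\mid M_{13},M_{23},R_3)\le n\epsilon_n$, where $R_3$ is Charlie's private randomness, and it remains to remove $R_3$ from the conditioning. This is the first technical hurdle: $M_{13}$ and $M_{23}$ depend on $R_3$, so it cannot be dropped naively, and the privacy hypotheses only license conditioning on a party's \emph{own} randomness (the remark after Definition~\ref{defn:ach_rate_triple}). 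I close the gap with a protocol-structure lemma, proved by induction on the rounds, that conditioned on the messages incident to Charlie his coins are independent of the inputs, $R_3\perp(X^n,Y^n)\mid(M_{13},M_{23})$ (this holds because Charlie has no input and every message he sends is a function of $R_3$ and the transcript he has already seen on links $13,23$). Hence $I(Z^n;R_3\mid M_{13},M_{23})=0$, so $H(Z^n\mid M_{13},M_{23})\le n\epsilon_n$, giving $R_{13}\ge H(Z)$ and, symmetrically, $R_{23}\ge H(Z)$. The randomness bound then follows quickly: using \eqref{eqn:ach_rate_4} and $Z^n=f(X^n,Y^n)$, $n\rho_n\ge H(M_{13},M_{23}\mid X^n,Y^n)\ge H(M_{13},M_{23}\mid Z^n)-n\delta_n\ge H(M_{13}\mid Z^n)-n\delta_n=H(M_{13})-I(M_{13};Z^n)-n\delta_n\ge nH(Z)-o(n)$.

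The main obstacle is the link-$12$ bound $R_{12}\ge H(Z)$, because Charlie sits on neither end of that link. The governing picture is a secret-sharing one: the two strings reaching Charlie must be correlated enough that together they pin down $Z^n$, yet each alone is asymptotically independent of $Z^n$; the correlation that accomplishes this is a one-time-pad-like key that, by \eqref{eqn:ach_rate_4}, must stay hidden from Charlie, and fresh common randomness between Alice and Bob that is hidden from Charlie can be created only over the private $12$-link. To make this quantitative I would isolate the hidden correlation between $M_{13}$ and $M_{23}$, bound it below by $nH(Z)-o(n)$ using decodability together with the single-share estimates, and show that whatever additional correlation Charlie manufactures by relaying traffic between the two links cannot substitute for it, precisely because any such correlation is a function of Charlie's own view and therefore, by \eqref{eqn:ach_rate_4}, cannot encode a key that hides the inputs from him. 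Converting this ``Charlie cannot manufacture the hidden key'' principle into a clean inequality, and thereby ruling out that relaying through Charlie replaces communication on the $12$-link, is where I expect the real difficulty to lie; this is exactly the point at which the honest-but-curious constraint, in the form of \eqref{eqn:ach_rate_4}, is essential. Combining the four bounds with achievability and taking the closure yields $\mathcal{R}$ as stated.
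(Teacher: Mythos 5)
Your achievability argument and three of the four converse bounds are sound. For $R_{13}$, $R_{23}$ and $\rho$ you take a more direct route than the paper: you work with $I(M_{13};Z^n\mid M_{23})$ and the ``single-share'' estimates $I(Z^n;M_{13}),I(Z^n;M_{23})\to 0$ (which do follow from \eqref{eqn:ach_rate_2}--\eqref{eqn:ach_rate_3} and the independence of $Z^n$ from each input, as you say), plus the fact that Charlie's coins satisfy $R_3\perp(X^n,Y^n)\mid(M_{13},M_{23})$ --- a fact the paper itself invokes in the remark after Definition~\ref{defn:ach_rate_triple}, so your induction-on-rounds lemma is legitimate and not an extra assumption. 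The paper instead funnels all four bounds through a single quantity, $I(X^n;M_{13}\mid M_{12})\geq nH(Z)-o(n)$; your decomposition is arguably cleaner for the links incident to Charlie and for $\rho$.

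However, there is a genuine gap: you do not prove $R_{12}\geq H(Z)$, and you say so yourself. This is the crux of the converse (K\"orner--Marton already gives the bounds on the links to Charlie in the non-interactive, non-private setting; the $12$-link bound is what is new), so the proposal as written does not establish the theorem. Your ``hidden key'' intuition is the right one, and the paper's proof is essentially its formalization: start from $\mathbb{E}[L_{12}]\geq H(M_{12})\geq H(M_{12}\mid M_{13})\geq I(X^n;M_{12}\mid M_{13})$, use $I(X^n;M_{13})\leq I(X^n;M_{13},M_{23}\mid Z^n)\leq n\epsilon_3$ (valid because $X^n\perp Z^n$) to pass to $I(X^n;M_{13}\mid M_{12})-n\epsilon_3$, and then show the key estimate $I(X^n;M_{13}\mid M_{12})\geq nH(Z)-n(\epsilon_1+\epsilon_2+\epsilon_4)$. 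That estimate is obtained by expanding $H(X^n\mid M_{12})-H(X^n\mid M_{12},M_{13})$, bounding $H(X^n\mid Y^n,M_{12},M_{13})\leq n\epsilon_4$ via the cut separating Alice from $\{$Bob, Charlie$\}$ together with Fano, and absorbing the residual terms $I(M_{13};Y^n\mid X^n,M_{12})$ and $I(X^n;M_{12}\mid Y^n)$ using \emph{both} \eqref{eqn:ach_rate_2} and \eqref{eqn:ach_rate_3}. In words: $M_{13}$ alone reveals $o(n)$ bits about $X^n$ (privacy from Charlie), yet together with $M_{12}$ it must reveal $nH(Z)$ bits of $X^n$ (decodability across the cut), so $M_{12}$ must carry $nH(Z)$ bits. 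Note that the mechanism is not ``Charlie cannot manufacture the key'' but a direct entropy count on $M_{12}$; without this chain your argument for $R_{12}$ remains a heuristic.
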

{\em Remark 1:} The achievability is in fact proved for the perfect privacy case where the
privacy conditions \eqref{eqn:ach_rate_2}-\eqref{eqn:ach_rate_4} hold with
equality. And, our converse is proved for the weak privacy setting where
\eqref{eqn:ach_rate_2}-\eqref{eqn:ach_rate_4} are replaced by
\eqref{eq:weakach_rate_2}-\eqref{eq:weakach_rate_4} (see
Section~\ref{sec:converse}), i.e., only the rates of information leaked need
to vanish asymptotically.\\
{\em Remark 2:} We note that if Charlie is required to compute $Z^n$ with
zero error and perfect privacy (i.e., when \eqref{eqn:ach_rate_1}-\eqref{eqn:ach_rate_4} hold with
equality), then on all three links we need $n$ bits to be exchanged and
$n$ bits of private randomness is needed~\cite{DataPP14}. This result is
discussed in the Appendix. 

\section{Proof of Achievability}
\label{sec:ach-protocol}

Our achievability scheme directly builds on K\"orner and Marton's scheme for
modulo-two sum of doubly symmetric binary sources~\cite{KornerM79}. Since
$(X,Y)$ is a DSBS-$p$, their XOR $Z=X\oplus Y$ is Bernoulli($p$).  It is
well-known that linear codes achieve the capacity of the binary symmetric
channel. i.e., for fixed $\epsilon>0$, $R=H(p)+\epsilon$ and for each block
length $n$, there is a linear coding matrix $\Lambda_n$ of size $(nR)\times
n$ and a decoder $\mathcal{D}_n$ such that $P(\mathcal{D}_n(\Lambda_n
Z^n)\neq Z^n)\to 0$ as $n\to \infty$.
In K\"orner and Marton's scheme, Alice sends $(\Lambda_n X^n)$ and Bob sends
$(\Lambda_n Y^n)$ to Charlie, who XORs the received vectors component-wise to
get $(\Lambda_n Z^n)$. Using the decoder $\mathcal{D}_n$, Charlie recovers
$Z^n$ with vanishing probability of error.
  
In our scheme, Alice first generates $m := nR$ private random Bernoulli($1/2$) bits
$K^{m}$ and sends it to Bob.  She also sends $A=K^m \oplus (\Lambda_n
X^n)$ to Charlie. Bob sends $B=K^m \oplus (\Lambda_n Y^n)$ to Charlie.
Charlie XORs the two binary vectors he received component-wise to get
$(\Lambda_n Z^n)$ and proceeds to decode as before.  This scheme has the
rate-tuple $(R,R,R,R)$ with $R=H(p)+\epsilon$.  Since $\epsilon$ can be
chosen to be arbitrarily small, it is sufficient to consider this class of
protocols for the achievability of Theorem~\ref{theorem:main_thm}.

It is straightforward to show that our protocol is perfectly private, i.e.,
(\ref{eqn:ach_rate_2}), (\ref{eqn:ach_rate_3}), and (\ref{eqn:ach_rate_4})
hold with equality. For (\ref{eqn:ach_rate_2}),
\begin{align*}
I(A,K^m ; Y^n | X^n) 
    & =  I(K^m ; Y^n | X^n)  + I(A ; Y^n | X^n,K^m)  =  0,
\end{align*}
since $K^m$ is independent of ($X^n,Y^n$), and $A$ is a function of ($X^n, K^m$).
Similarly, (\ref{eqn:ach_rate_3}) holds with equality.
Finally, for (\ref{eqn:ach_rate_4}),
\begin{align*}
I(&A, B; X^n,Y^n | Z^n) \\
& = I(A, B; X^n | Z^n) \\
& = I(A, B,Z^n; X^n ) - \underbrace{I(Z^n;X^n)}_{=\ 0} \\
& = I(K^m \oplus (\Lambda_nX^n), K^m \oplus (\Lambda_nY^n),Z^n ; X^n ) \\
& = I(K^m \oplus (\Lambda_nX^n), Z^n ; X^n) \\
& = 0,
\end{align*}
since $(K^m \oplus \Lambda_nX^n, Z^n)$ is independent of $X^n$. The
penultimate step follows from the fact that $K^m \oplus (\Lambda_nY^n) = (K^m
\oplus (\Lambda_nX^n)) \oplus (\Lambda_nZ^n)$.

\section{Proof of Converse}
\label{sec:converse}
Let ($R_{13},R_{23},R_{12},\rho$) be an achievable rate quadruple. Then, by
Definition~\ref{defn:ach_rate_triple}, there exists a sequence of protocols
$(\Pi_{n})_{n \in \mathbb{N}}$ with the corresponding rates
$r_{ij,n} \leq R_{ij}$, $i,j = 1,2,3, i\neq j$, $\rho_n \leq \rho$,
satisfying  \eqref{eqn:ach_rate_1} and the
weak privacy conditions
\begin{align}
\epsilon_1 & := \frac{1}{n}\, I(M_{13},M_{12} ; Y^n | X^n)  \longrightarrow 0, \label{eq:weakach_rate_2} \\
\epsilon_2 & := \frac{1}{n}\, I(M_{23},M_{12} ; X^n | Y^n)  \longrightarrow 0, \label{eq:weakach_rate_3}\\
\epsilon_3 & := \frac{1}{n}\, I(M_{13}, M_{23} ; X^n,Y^n | Z^n)
\longrightarrow 0, \label{eq:weakach_rate_4}\\
\intertext{as $n\to \infty$. By Fano's inequality, \eqref{eqn:ach_rate_1}
implies, as $n\to \infty$,}
\epsilon_4 & := \frac{1}{n} H(Z^n|\hat{Z}^n) \longrightarrow 0.
\end{align}
For the lower bound on $R_{12}$, we proceed as follows.
\begin{align}
&\mathbb{E}[L_{12}]\\
&= \mathbb{E}\left[\sum_{t=1}^\infty L_{\overrightarrow{12},t} +
L_{\overrightarrow{21},t}\right] \nonumber \\
&= \sum_{t=1}^\infty \mathbb{E}\left[L_{\overrightarrow{12},t}\right] +
\mathbb{E}\left[L_{\overrightarrow{21},t}\right] \nonumber \\
&\geq \sum_{t=1}^\infty H(M_{\overrightarrow{12},t}|{\mathcal
C}_{\overrightarrow{12},t}) + H(M_{\overrightarrow{21},t}|{\mathcal
C}_{\overrightarrow{21},t}) \label{eq:entropy_lb} \\
&\geq  \sum_{t=1}^\infty H(M_{\overrightarrow{12},t}|M_{12}^{t-1}) + H(M_{\overrightarrow{21},t}|M_{12}^{t-1}) \label{eq:M12_first}\\
&\geq \sum_{t=1}^{\infty} H(M_{\overrightarrow{12},t},M_{\overrightarrow{21},t} | M_{12}^{t-1}) \nonumber \\ &= H(M_{12})\\
&\geq H(M_{12}|M_{13}) \nonumber \\
&\geq I(X^n;M_{12} | M_{13}) \nonumber \\
&= I(X^n;M_{12},M_{13}) - I(X^n;M_{13}) \nonumber \\
&\geq I(X^n;M_{13}|M_{12}) - n\epsilon_3 \label{eq:M12_XM13}\\
&= H(X^n|M_{12}) - H(X^n|M_{12},M_{13}) - n\epsilon_3  \nonumber \\
&= H(X^n|M_{12}) - I(X^n;Y^n|M_{12},M_{13}) \nonumber \\
&\hspace{3.1cm} - H(X^n|Y^n,M_{12},M_{13}) - n\epsilon_3  \nonumber \\
&\geq H(X^n|M_{12}) - I(X^n;Y^n|M_{12},M_{13}) - n\epsilon_4 - n\epsilon_3  \label{eq:M12_second} \\
&= H(X^n) - I(X^n;M_{12}) - I(X^n;Y^n|M_{12},M_{13}) \nonumber \\
& \hspace{5.8cm} - n\epsilon_4 - n\epsilon_3 \nonumber \\
&= \underbrace{H(X^n|Y^n)}_{=\ nH(Z)} + I(X^n;Y^n) - I(X^n;M_{12}) \nonumber \\
&\hspace{3.1cm} - \underbrace{I(X^n;Y^n|M_{12},M_{13})}_{\leq I(X^n,M_{13};Y^n|M_{12})} - n\epsilon_4 - n\epsilon_3 \nonumber \\
&\geq nH(Z) + I(X^n;Y^n) - I(X^n;M_{12}) - I(X^n;Y^n|M_{12}) \nonumber \\
&\hspace{3.1cm} - \underbrace{I(M_{13};Y^n|X^n,M_{12})}_{\leq\ n\epsilon_1, \text{ by }\eqref{eq:weakach_rate_2}} - n\epsilon_4 - n\epsilon_3 \nonumber \\
&= nH(Z) + I(X^n;Y^n) - I(X^n;Y^n,M_{12}) \nonumber \\
&\hspace{5.6cm} - n\epsilon_1 - n\epsilon_4 - n\epsilon_3 \nonumber \\
&= nH(Z) - \underbrace{I(X^n;M_{12}|Y^n)}_{\leq\ n\epsilon_2, \text{ by }\eqref{eq:weakach_rate_3}} - n\epsilon_1 - n\epsilon_4 - n\epsilon_3 \nonumber \\
&= nH(Z) - n\epsilon_2 - n\epsilon_1 - n\epsilon_4 - n\epsilon_3. \nonumber
\end{align}
Here, in \eqref{eq:entropy_lb}, ${\mathcal C}_{\overrightarrow{12},t}$ and ${\mathcal C}_{\overrightarrow{21},t}$ denote the prefix-free codes
that are used in sending the messages $M_{\vec{12},t}$ and $M_{\vec{21},t}$, respectively.
These codes depend on the particular instance of the protocol, and are known
to Alice and Bob based on all the messages ($M_{12}^{t-1}$) communicated between 
them till time $t-1$.
\eqref{eq:entropy_lb} follows from the fact that expected length $L$ of any
prefix-free binary code for a random variable $U$ is lower bounded by
$H(U)$~\cite[Theorem~5.3.1]{CoverT}.
\eqref{eq:M12_first} holds because at time $t$, the prefix-free codes used by any two users (say 1 and 2) are determined by $M_{12}^{t-1}$. 
\eqref{eq:M12_XM13} follows because, since $X^n$ and $Z^n$ are independent,
$I(X^n;M_{13}) \leq I(X^n;M_{13},M_{23},Z^n) = I(X^n;M_{13},M_{23}|Z^n)\leq n\epsilon_3$.
\eqref{eq:M12_second} follows from $H(X^n|Y^n,M_{12},M_{13}) \leq
n\epsilon_4$ which can be seen as follows: From the cut separating Alice
from Bob and Charlie, it follows that, conditioned on $(M_{12},M_{13},Y^n)$, Charlie's output $\hat{Z}^n$ is independent of $X^n$, which implies the Markov chain $\hat{Z}^n-(M_{12},M_{13},Y^n)-X^n$. Therefore, $H(X^n|Y^n,M_{12},M_{13}) = H(X^n|Y^n,M_{12},M_{13},\hat{Z}^n)$. 
Since $Z=X\oplus Y$, we have $H(X^n|Y^n,M_{12},M_{13},\hat{Z}^n) =
H(Z^n|Y^n,M_{12},M_{13},\hat{Z}^n)\leq H(Z^n|\hat{Z}^n)=n\epsilon_4$.

Now, since $\epsilon_1 + \epsilon_2 + \epsilon_3 + \epsilon_4 \to 0$ as
$n\to \infty$, and $r_{12,n} = \frac{1}{n} \mathbb{E}[L_{12}] \leq R_{12}$, 
we have, $$R_{12} \geq H(Z).$$

The lower bound on $\mathbb{E}[L_{13}]$ and $\mathbb{E}[L_{23}]$ can be
proved along the same lines as for $\mathbb{E}[L_{12}]$.  For
$\mathbb{E}[L_{13}]$, we use the prefix free codes ${\mathcal C}_{\overrightarrow{13},t}$ and ${\mathcal C}_{\overrightarrow{31},t}$ for $M_{\vec{13},t}$ and $M_{\vec{31},t}$ at time $t$ 
{\color{black} which can be determined from $(M_{13}^{t-1})$}.
Once we get to the point
$\mathbb{E}[L_{13}] \geq H(M_{13}) \geq H(M_{13}|M_{12})$, we apply $H(M_{13}|M_{12}) \geq
I(X^n;M_{13}|M_{12})$, and from this point onwards, proceed exactly as from
\eqref{eq:M12_XM13}.
Since $\mathbb{E}[L_{13}]$ and $\mathbb{E}[L_{23}]$ are symmetric, appropriate modifications will prove the same result for $\mathbb{E}[L_{23}]$.
Thus, we have
\begin{align*}
&R_{13} \geq H(Z)\\
\text{and } &R_{23} \geq H(Z).
\end{align*}

\noindent{\em Remark:} K\"orner and Marton~\cite{KornerM79} proved a lower
bound of $H(Z)$ on $R_{13}$ and $R_{23}$ assuming non-interactive
communication between Alice/Bob and Charlie, that is, Alice and Bob both
send one message to Charlie and based on these two messages Charlie
produces the output. However, this does not directly apply here since now
there is a link between Alice and Bob, and in addition we allow interactive
communication and private randomness. Note that our bound depends on both
the privacy and correctness conditions since, in the absence of the privacy
conditions, Alice need not communicate directly with Charlie, for
instance.

For the randomness rate $\rho$, we proceed as follows:
\begin{align}
n\rho_n &\geq H(M_{12}|X^n,Y^n)\nonumber\\
&= H(M_{12}|X^n) - \underbrace{I(M_{12};Y^n|X^n)}_{\leq\ n\epsilon_1, \text{ by } \eqref{eq:weakach_rate_2}} \nonumber \\
&\geq I(M_{12};M_{13}|X^n) - n\epsilon_1 \nonumber \\
&= I(M_{12},X^n;M_{13}) - I(X^n;M_{13}) - n\epsilon_1 \nonumber \\
&\geq I(X^n;M_{13}|M_{12}) - n\epsilon_3 - n\epsilon_1 \label{eq:rand_first} \\
&\geq nH(Z) -n\epsilon_2 - n\epsilon_1 - n\epsilon_4 - n\epsilon_3 - n\epsilon_1, \label{eq:rand_second}
\end{align}
where \eqref{eq:rand_first} follows for the same reason as \eqref{eq:M12_XM13}. To bound $I(X^n;M_{13}|M_{12})$ in \eqref{eq:rand_first}, we proceed similarly as done from \eqref{eq:M12_XM13}.

Since $2\epsilon_1 + \epsilon_2 + \epsilon_3 + \epsilon_4 \to 0$ as 
$n\to \infty$, and $\rho_n \leq \rho$, we have
\begin{align*}
& \rho \geq H(Z).
\end{align*}
This completes the proof of the converse.

\noindent{\em Remark:} Our converse allows for a very
general class of protocols. We not only consider protocols with
fixed-length messages, but those with variable length messages as well. We
imposed a technical condition that the (potentially random) prefix-free
code used for any message transmission on a link be fully determined by
previous messages exchanged over the same link. Strictly
speaking, this is not necessary. It will suffice for the two communicating
users to both agree on the same code (with probability 1), but in this they
may rely on their data (if any), private randomness, and messages from
the third user as well. We believe that the same result holds even for this
slightly more general setting. The proof here can be readily extended to
derive the same lower bounds in this more general case for all but
$R_{12}$.

\appendix
\section{Perfectly secure computation of {XOR} with zero-error}
Here we summarize the arguments of~\cite{DataPP14} specialized to perfectly secure computation of {XOR} (with zero error and perfect privacy), i.e., \eqref{eqn:ach_rate_1}-\eqref{eqn:ach_rate_4} hold with equality. We allow all input distributions $p_{XY}$ with full support. Alice and Bob each have a block $X^n$ and $Y^n$ of $n$ bits respectively, and Charlie wants to compute $Z^n$, component-wise {XOR} of the input bits. A simple protocol for this is: Alice samples $n$ i.i.d. uniformly distributed bits $(K_1,K_2,\hdots,K_n)$ from her private randomness and sends $M_{13}=K^n\oplus X^n$ to Charlie and $M_{12}=K^n$ to Bob. Bob computes $M_{12}\oplus Y^n$ and sends it to Charlie as $M_{23}$. Charlie computes $M_{13}\oplus M_{23}$, which is equal to $X^n\oplus Y^n$ and outputs it. Clearly, this protocol requires $n$ privately random bits as well as $n$ bits to be communicated on each of the three links. In Theorem~\ref{lem:xor_lbs}, we show that this is optimal.
\begin{lemma}\label{lem:cutset_indep}
Any perfectly secure protocol for computing {XOR} (with zero-error and perfect privacy), for $p_{XY}$ with full support, satisfies
\begin{align*}
&H(X^n|M_{12},M_{13}) = H(Y^n|M_{12},M_{23}) = 0, \\
I(M_{12}&;X^n,Y^n) = I(M_{13};X^n,Y^n) = I(M_{23};X^n,Y^n) = 0.
\end{align*}
\end{lemma}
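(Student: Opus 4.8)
The plan is to extract from each perfect‑privacy condition a statement that the conditional law of a message tuple, given the inputs, depends on $(X^n,Y^n)$ through only a single ``coordinate,'' and then to combine two such statements and invoke the full support of $p_{XY}$ to force either constancy (the independence claims) or degeneracy (the determinism claims). Writing $z=x\oplus y$, Bayes' rule turns the three perfect‑privacy identities \eqref{eqn:ach_rate_2}--\eqref{eqn:ach_rate_4} into the following: $P(m_{13},m_{12}\mid x,y)=P(m_{13},m_{12}\mid x)$, a function of $x$ alone; $P(m_{23},m_{12}\mid x,y)=P(m_{23},m_{12}\mid y)$, a function of $y$ alone; and $P(m_{13},m_{23}\mid x,y)=P(m_{13},m_{23}\mid z)$, a function of $z$ alone. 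Marginalizing these over the unwanted link gives the per‑link versions I will use.

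I would prove the three independence statements first. For $I(M_{12};X^n,Y^n)=0$, marginalizing the first identity over $m_{13}$ and the second over $m_{23}$ shows that $P(m_{12}\mid x,y)$ is simultaneously a function of $x$ alone, say $g(x)$, and a function of $y$ alone, say $h(y)$. Since $p_{XY}$ has full support, every pair $(x,y)$ occurs, so $g(x)=h(y)$ for all $x,y$; fixing $x$ and varying $y$ forces $h$, and hence $g$, to be constant, i.e.\ $P(m_{12}\mid x,y)=P(m_{12})$, which is $M_{12}$ independent of $(X^n,Y^n)$. For $I(M_{13};X^n,Y^n)=0$ I would combine the first and third identities: $P(m_{13}\mid x,y)$ is a function of $x$ alone, $g(x)$, and also a function of $x\oplus y$ alone, $f(x\oplus y)$. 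By full support $g(x)=f(x\oplus y)$ for all $(x,y)$; fixing $x$ and letting $y$ range over all of $\{0,1\}^n$ makes the argument of $f$ range over everything, so $f$ is constant, hence so is $g$, giving $M_{13}$ independent of $(X^n,Y^n)$. The claim $I(M_{23};X^n,Y^n)=0$ is the mirror image, combining the second and third identities.

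For the two determinism statements I would first reproduce the correctness step of the converse with $\epsilon_4=0$: the cut separating Alice from Bob and Charlie gives the Markov chain $\hat Z^n-(M_{12},M_{13},Y^n)-X^n$, and since $\hat Z^n=Z^n$ by zero error and $X^n=Y^n\oplus Z^n$, this yields $H(X^n\mid Y^n,M_{12},M_{13})=0$. To remove the conditioning on $Y^n$, I would again use Bayes with the first privacy identity: for each fixed $(m_{12},m_{13})$, $P(x,y\mid m_{12},m_{13})\propto p_{XY}(x,y)\,\psi(x)$ with $\psi(x)=P(m_{12},m_{13}\mid x)$. By $H(X^n\mid Y^n,M_{12},M_{13})=0$ there is, for each $y$, a unique consistent $x$; since $p_{XY}(x,y)>0$ for all $(x,y)$, the consistent $x$'s are exactly $\{x:\psi(x)>0\}$, which must therefore be a singleton. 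Hence $X^n$ is a function of $(M_{12},M_{13})$, i.e.\ $H(X^n\mid M_{12},M_{13})=0$, and $H(Y^n\mid M_{12},M_{23})=0$ follows symmetrically from the second privacy identity and the cut separating Bob.

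The main obstacle is not any single calculation but making the two structural inputs rigorous and recognizing that full support is what couples them. The delicate input is the correctness Markov chain $\hat Z^n-(M_{12},M_{13},Y^n)-X^n$ for a general interactive protocol with private randomness; this is precisely the cut argument already invoked in the converse (around \eqref{eq:M12_second}), so I would import it verbatim. Everything else reduces to the elementary but essential observation that a conditional probability which is simultaneously a function of $x$ alone and of $y$ (or of $x\oplus y$) alone must be constant on a full‑support distribution, and that a nonnegative weight supported on a set forced to have a single point yields determinism; it is exactly here that the full‑support hypothesis on $p_{XY}$ is used in an indispensable way.
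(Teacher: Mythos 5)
Your proof is correct. Note, however, that the paper does not actually prove this lemma in-line: its ``proof'' is a pointer to Lemmas~2 and~3 of~\cite{DataPP14}, so what you have written is a self-contained derivation of a result the paper imports as a black box. Your route is sound at every step: the translation of the three perfect-privacy conditions into the pointwise identities $P(m_{13},m_{12}\mid x,y)=P(m_{13},m_{12}\mid x)$, $P(m_{23},m_{12}\mid x,y)=P(m_{23},m_{12}\mid y)$, $P(m_{13},m_{23}\mid x,y)=P(m_{13},m_{23}\mid x\oplus y)$ is exactly what zero conditional mutual information gives on the (full) support; the ``function of $x$ alone and of $y$ alone (or of $x\oplus y$ alone) must be constant'' step is where full support is genuinely needed and you use it correctly; and the determinism claims correctly combine the cut-set Markov chain $\hat Z^n-(M_{12},M_{13},Y^n)-X^n$ (which the paper itself asserts and uses in its converse, so importing it is fair) with the observation that the conditional support of $X^n$ given $(Y^n=y,M_{12}=m_{12},M_{13}=m_{13})$ equals $\{x:\psi(x)>0\}$ independently of $y$, forcing that set to be a singleton. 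This is essentially the same cut-set-plus-residual-independence machinery that underlies the cited lemmas of~\cite{DataPP14}, but specialized and made explicit for XOR; what your version buys is that the lemma becomes verifiable without consulting the reference, at the cost of relying on the unproved (but paper-endorsed) cut-set conditional independence for general interactive protocols with private randomness.
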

\begin{proof}
See~\cite[Lemmas~2~and~3]{DataPP14}.
\end{proof}
The lemma states that (i) examining the transcripts on the links which Alice is party to must reveal $X^n$ (similarly for Bob and $Y^n$), and (ii) examining the transcripts on any one of the links must reveal nothing about $X^n,Y^n$.
\begin{theorem}[Theorem~13 of~\cite{DataPP14}]\label{lem:xor_lbs}
Any perfectly secure protocol for computing {XOR} (with zero-error and perfect privacy), for $p_{XY}$ with full support, satisfies, 
\[r_{12,n},r_{23,n},r_{13},\rho_n \geq 1.\]
\end{theorem}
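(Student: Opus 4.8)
\emph{Plan.} The strategy is to reduce every rate and the randomness rate to an unconditional or conditional Shannon entropy, and then to lower bound these entropies using only the two collections of identities supplied by Lemma~\ref{lem:cutset_indep}: the \emph{recoverability} statements $H(X^n|M_{12},M_{13})=H(Y^n|M_{12},M_{23})=0$ and the \emph{one-link secrecy} statements $I(M_{12};X^n,Y^n)=I(M_{13};X^n,Y^n)=I(M_{23};X^n,Y^n)=0$. As in the main converse, I first replace expected message lengths by entropies: since each message is carried by a prefix-free code, the same argument that gave \eqref{eq:entropy_lb}, namely $\mathbb{E}[L]\ge H(\cdot)$ \cite[Theorem~5.3.1]{CoverT}, yields $\mathbb{E}[L_{ij}]\ge H(M_{ij})$ for every link. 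It then suffices to show $H(M_{12}),H(M_{13}),H(M_{23})\ge n$ and $H(M_{12},M_{13},M_{23}|X^n,Y^n)\ge n$.

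\emph{The link bounds.} For link $13$ I would combine recoverability of $X^n$ with the secrecy of the $12$-link. Conditioning reduces entropy, so $H(M_{13})\ge H(M_{13}|M_{12})\ge I(X^n;M_{13}|M_{12})$, and
\[
I(X^n;M_{13}|M_{12}) = H(X^n|M_{12}) - H(X^n|M_{12},M_{13}) = H(X^n),
\]
where the last conditional entropy vanishes by recoverability and $H(X^n|M_{12})=H(X^n)$ because $M_{12}$ is independent of $(X^n,Y^n)$. Since the DSBS has uniform marginals, $H(X^n)=n$, so $H(M_{13})\ge n$. The bound $H(M_{23})\ge H(Y^n)=n$ follows by the symmetric computation using $H(Y^n|M_{12},M_{23})=0$ and $I(M_{23};X^n,Y^n)=0$. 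For the cross link, the same manipulation with the roles of the conditioning message and the recovered message interchanged gives $H(M_{12})\ge I(X^n;M_{12}|M_{13}) = H(X^n|M_{13})-H(X^n|M_{12},M_{13}) = H(X^n)=n$, now using that $M_{13}$ is independent of $X^n$.

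\emph{The randomness bound.} Here I would drop all but one message and exploit one-link secrecy directly: $n\rho_n = H(M_{12},M_{13},M_{23}|X^n,Y^n)\ge H(M_{12}|X^n,Y^n) = H(M_{12})\ge n$, where the middle equality is exactly $I(M_{12};X^n,Y^n)=0$ and the final inequality is the link bound just proved. Dividing by $n$ gives $\rho_n\ge 1$, and likewise $r_{12,n},r_{13,n},r_{23,n}\ge 1$.

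\emph{Main obstacle.} Given Lemma~\ref{lem:cutset_indep}, the steps above are short and essentially mechanical; the real content is the lemma itself, whose recoverability and one-link-secrecy identities encode the cut-set and perfect-privacy structure of zero-error secure computation and are imported from \cite{DataPP14}. The one delicate point in the argument proper is that the bound produced is $H(X^n)$ (respectively $H(Y^n)$), so the clean constant $1$ relies on the marginals being uniform, as they are for the DSBS; establishing the exact constant for arbitrary full-support marginals would require the sharper zero-error combinatorial reasoning underlying the cited lemma rather than the entropy inequalities used here.
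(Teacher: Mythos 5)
Your entropy manipulations are essentially the paper's: the reduction $\mathbb{E}[L_{ij}]\ge H(M_{ij})$ via prefix-free codes, the chain $H(M_{12})\ge H(M_{12}|M_{13})\ge H(X^n|M_{13})=H(X^n)$ (you phrase it through $I(X^n;M_{12}|M_{13})$, the paper through $H(M_{12},X^n|M_{13})$ in \eqref{eq:zeroerror_M12_1}--\eqref{eq:zeroerror_M12_2}, but these are the same two applications of Lemma~\ref{lem:cutset_indep}), and the randomness bound $n\rho_n\ge H(M_{12}|X^n,Y^n)=H(M_{12})$. All of that is correct as far as it goes.

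The genuine gap is the final step from $H(X^n)$ to $n$. The theorem is stated for an arbitrary $p_{XY}$ with full support, for which $H(X^n)$ can be strictly less than $n$, so your argument as written only proves the claimed constant when the marginals happen to be uniform. You flag this, but the missing ingredient is not ``sharper zero-error combinatorial reasoning'' beyond the lemma --- it is the \emph{distribution switching} step that the paper performs explicitly: because zero-error and perfect privacy are pointwise conditions (they hold for every input pair in the support), the same protocol remains perfectly secure when the inputs are redistributed according to any other full-support law $p_{\tilde X\tilde Y}$; moreover, by the one-link secrecy identities of Lemma~\ref{lem:cutset_indep} the marginal distribution of each transcript $M_{ij}$ (hence $H(M_{ij})$ and $\mathbb{E}[L_{ij}]$) is unchanged under this switch. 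One may therefore take a supremum over full-support distributions, $nr_{12,n}\ge\sup_{p_{\tilde X\tilde Y}}H(\tilde X^n)=n$, realized in the limit by the uniform distribution. Without this step your proof establishes the theorem only for the DSBS instance used in the main body, not for the class of distributions in the statement.
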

\begin{proof}
We only prove the lower bound on $\mathbb{E}[L_{12}]$ and $\rho$. The others follow similarly.
We can lower bound $\mathbb{E}[L_{12}]$ by $H(M_{12})$ exactly as we did in the proof of converse (Section \ref{sec:converse}) of Theorem \ref{theorem:main_thm}. So
\begin{align}
nr_{12,n}=\mathbb{E}[L_{12}] &\geq H(M_{12}) \nonumber \\
&\geq H(M_{12}|M_{13}) \nonumber \\
&= H(M_{12},X^n|M_{13}) \label{eq:zeroerror_M12_1} \\
&\geq H(X^n|M_{13}) \nonumber \\
&= H(X^n), \label{eq:zeroerror_M12_2} 
\end{align}
where \eqref{eq:zeroerror_M12_1} and \eqref{eq:zeroerror_M12_2} follow from Lemma \ref{lem:cutset_indep}.

Now we apply the {\em distribution switching} idea from~\cite{DataPP14} to complete the argument. Briefly, we note that any secure protocol for {XOR}, where input distribution $p_{XY}$ has full support, continues to be a secure protocol even if we switch the input distribution to a different one $p_{\tilde{X}\tilde{Y}}$. This follows directly from zero-error and prefect privacy conditions. Together with Lemma \ref{lem:cutset_indep}, this implies that the marginal distributions of the transcripts $M_{12}$, $M_{23}$ and $M_{13}$ (and therefore their expected lengths) do not change if we switch the input distribution; see~\cite[Section~3.2]{DataPP14} for more details. This allows us to argue that 
\[nr_{12,n} \geq \sup_{p_{\tilde{X}\tilde{Y}}}H(\tilde{X}^n)=n,\] where $p_{\tilde{X}\tilde{Y}}$ is any distribution having full support. Now, taking the uniform distribution gives $r_{12,n} \geq 1$. For randomness,
\begin{align*}
n\rho_n &\geq H(M_{12},M_{23},M_{13}|X^n,Y^n) \\
&\geq H(M_{12}|X^n,Y^n) \\
&= H(M_{12}) \quad (\text{from Lemma }\ref{lem:cutset_indep})\\
&\geq H(M_{12}|M_{13}) \\
&\geq n. \quad (\text{as for }nr_{12,n} \text{ above})
\end{align*}
\end{proof}

\section*{Acknowledgment}
The work  was supported in part by the Bharti Centre for Communication, IIT Bombay, a grant from the Information Technology Research Academy, Media Lab Asia, to IIT Bombay and TIFR, a grant from the Department of Science and Technology, Government of India, to IIT Bombay, and a Ramanujan Fellowship from the Department of Science and Technology, Government of India, to V. Prabhakaran.

\end{document}